\documentclass{cccg14arxiv}

\usepackage{amsmath}
\usepackage{amssymb}
\usepackage{balance}
\usepackage{dsfont}
\usepackage{graphicx}
\usepackage[utf8x]{inputenc}
\usepackage[disable]{todonotes}
\usepackage{xspace}
\usepackage[pdfborder={0 0 0}]{hyperref}
\usepackage{url}

\title{A PTAS for the continuous 1.5D Terrain Guarding Problem}

\newcommand*\samethanks[1][\value{footnote}]{\footnotemark[#1]}
\author{%
	Stephan~Friedrichs\thanks{Department of Computer Science, TU Braunschweig, Germany.
		{\tt stephan.friedrichs@tu-bs.de, mhsaar@gmail.com, c.schmidt@tu-bs.de}}
	\and Michael~Hemmer\samethanks
	\and Christiane~Schmidt\samethanks}

\index{Friedrichs, Stephan}
\index{Hemmer, Michael}
\index{Schmidt, Christiane}

\newtheorem{definition}[theorem]{Definition}

\newcommand{\bigO}{\operatorname{O}}

\newcommand{\NP}{\textit{NP}}

\newcommand{\ie}{i.\,e.\xspace}

\newcommand{\opt}{\operatorname{OPT}}

\newcommand{\tgp}{\operatorname{TGP}}
\newcommand{\V}{\operatorname{\mathcal{V}}}
\newcommand{\VK}{\operatorname{VK}}
\newcommand{\wrt}{w.\,r.\,t.\xspace}

\begin{document}

\thispagestyle{empty}
\maketitle

\begin{abstract}
In the continuous 1.5-dimensional terrain guarding problem we are given an $x$-monotone chain (the \emph{terrain} $T$) and ask for the minimum number of point guards (located anywhere on $T$), such that all points of $T$ are covered by at least one guard.
It has been shown that the 1.5-dimensional terrain guarding problem is \NP-hard.
The currently best known approximation algorithm achieves a factor of $4$.
For the discrete problem version with a finite set of guard candidates and a finite set of points on the terrain that need to be monitored, a polynomial time approximation scheme (PTAS) has been presented~\cite{gkkv-aasftg-09}.
We show that for the general problem we can construct finite guard and witness sets, $G$ and $W$, such that there exists an optimal guard cover $G^* \subseteq G$ that covers $T$, and when these guards monitor all points in $W$ the entire terrain is guarded.
This leads to a PTAS as well as an (exact) IP formulation for the continuous terrain guarding problem.
\end{abstract}

\section{Introduction}
\label{sec:introduction}

Let a \emph{terrain} $T$ denote an $x$-monotone chain defined by its vertices $V = \{ v_1, \dots, v_n \}$.
It has edges $E = \{ e_1, \dots, e_{n-1} \}$ with $e_i = \overline{v_i v_{i+1}}$.
Due to its monotonicity the points on $T$ are totally ordered with regard to their $x$-coordinate.
For $p,q \in T$, we write $p < q$ if $p$ is \emph{left} of $q$, \ie, if $p$ has a smaller $x$-coordinate than $q$.

A point $p \in T$ \emph{sees} or \emph{covers} $q \in T$ iff $\overline{pq}$ is nowhere below $T$.
$\V(p)$ is the \emph{visibility region} of $p$ with $\V(p) = \{ q \in T \mid \textnormal{$p$ sees $q$} \}$.
%
$\V(p)$ is not necessarily connected, and can be considered as the union of $\bigO(n)$ maximal subterrains, compare Figure~\ref{fig:visibility_region}.
We say that $q \in \V(p)$ is \emph{extremal} in $\V(p)$, if $q$ has a maximal
or minimal $x$-coordinate within its connected component of $\V(p)$.

\begin{figure}
  \begin{center}
    \includegraphics[width=0.45\textwidth]{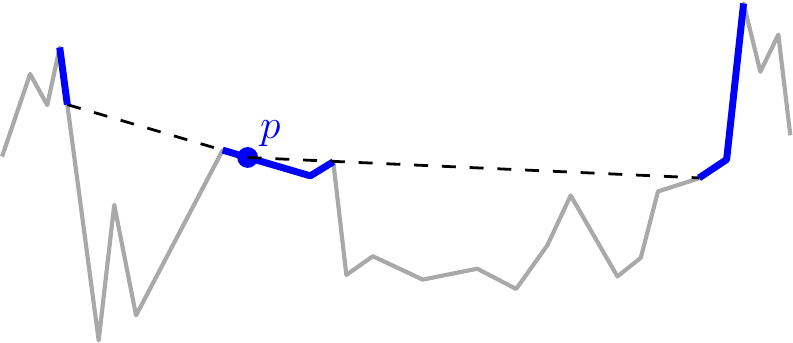}
  \end{center}
  \caption{The visibility region $\V(p)$ (blue) of point $p$.}
  \label{fig:visibility_region}
\end{figure}

For $G \subseteq T$ we abbreviate $\V(G) := \bigcup_{g \in G} \V(g)$.
A set $G \subseteq T$ with $\V(G) = T$ is named a \emph{(guard) cover} of $T$.
In this context, $g \in G$ is referred to as \emph{guard}.


\begin{definition}[Terrain Guarding Problem]\label{def:tgp}
	In the \emph{Terrain Guarding Problem (TGP)}, we are given a terrain $T$ and sets of guard candidates and witnesses $G, W \subseteq T$.
	We seek a minimum-cardinality guard cover $G^* \subseteq G$ such that $W \subseteq \V(G^*)$, \ie, $G^*$ covers $W$, and abbreviate this problem by $\tgp(G,W)$.
	We assume $W \subseteq \V(G)$, \ie, that $\tgp(G,W)$ always has a feasible solution.
\end{definition}

Note that $\tgp(T,T)$ is the continuous TGP and $\tgp(V,T)$ is the special case with vertex guards.

\subsection{Related Work}
\label{sec:rw}

The terrain guarding problem is closely related to the well known \emph{Art Gallery Problem} (AGP) where the objective is to find a minimum cardinality guard set that allows complete coverage of a polygon $P$.
Many different versions of this problem have been considered, including variants with guards restricted to be located on vertices (\emph{vertex guards}), patrolling along edges (\emph{edge guards}), or located on arbitrary positions in $P$ (\emph{point guards}); each in simple polygons and in polygons with holes.

The first result was obtained by Chv\'{a}tal~\cite{c-actpg-75}, who proved the ``Art Gallery Theorem'', answering a question posed by Victor Klee in 1973: $\lfloor \frac{n}{3} \rfloor$ many guards are always sufficient and sometimes necessary to guard a polygon of $n$ vertices.
A simpler and elegant proof of the sufficiency was later given by Fisk~\cite{f-spcwt-78}.
Related results were obtained for different polygon classes.
The optimization problem was shown to be NP-hard for various problem versions~\cite{os-snpdp-83,sh-tnhag-95}, even the allegedly easier problem of finding a minimum cardinality vertex guard set in simple polygons is NP-hard~\cite{ll-ccagp-86}.
Eidenbenz et al.~\cite{esw-irgpt-01} gave bounds on the approximation ratio: For polygons with holes a lower bound of $\Omega(\log n)$ holds, for vertex, edge and point guards in simple polygons they showed the problem to be APX-hard.

For detailed surveys on the AGP see O'Rourke~\cite{r-agta-87} or Shermer~\cite{s-rrag-92} for classical results.
\todo[inline]{or~\cite{} for more recent computational developments.}

Motivation for terrain guarding is the placement of street lights or security cameras along roads~\cite{gkkv-aasftg-09}, or the optimal placement of antennas for line-of-sight communication networks~\cite{bkm-acfaafotg-07}.

For the terrain guarding problem the focus was on approximation algorithms, because NP-hardness was generally assumed, but had not been established by then.
The first who were able to establish a constant-factor approximation algorithm were Ben-Moshe et al.~\cite{bkm-acfaafotg-07}.
They presented a combinatorial constant-factor approximation for the discrete vertex guard problem version $\tgp(V,V)$, where only vertex guards are used to cover only the vertices, and were able to use it as a building block for an $\bigO(1)$-approximation of the continuous terrain guarding variant $\tgp(T,T)$.
The approximation factor of this algorithm was never stated by the authors, but was claimed to be 6 in~\cite{k-a4aafgdt-06} (with minor modifications).
Another constant-factor approximation based on $\epsilon$-nets and Set Cover was given by Clarkson and Varadarajan~\cite{cv-iaags-07}.
King~\cite{k-a4aafgdt-06} presented a 4-approximation (which was later shown to actually be a 5-approximation~\cite{k-eaag}), both for the discrete $\tgp(V,V)$ and the continuous $\tgp(T,T)$ problem.
The most recent $\bigO(1)$-approximation was presented by Elbassioni et al.~\cite{emms-iafgdt-08,ekmms-iag15-11}:
For non-overlapping discrete sets $G,W \subset T$ LP-rounding techniques lead to a 4-approximation
(5-approximation if $G \cap W \neq \emptyset$) for $\tgp(G,W)$ as well as for
the continuous case $\tgp(T,T)$.
This approximation is also applicable for the more general \emph{weighted} terrain guarding problem:
Weights are assigned to the guards and a minimum weight guard set is to be identified.

Finally, in 2009, Gibson et al.~\cite{gkkv-aasftg-09} showed that the discrete terrain guarding problem allows a \emph{polynomial time approximation scheme} (PTAS) based on local search.
They present PTASs for two problem variants: for $\tgp(G,W)$ where $G$ and $W$ are (not necessarily disjoint) finite subsets of the terrain $T$ and for $\tgp(G,T)$, \ie, the variant with a finite guard candidate set $G$.
For the continuous case, \ie, $\tgp(T,T)$, they claim that the local search works as well, but that they were not yet able to limit the number of bits needed to represent the guards maintained by the local search.
Thus, to the best of our knowledge, no PTAS for $\tgp(T,T)$ has been established until now.

The NP-hardness of the TGP was settled after all these approximation results: King and Krohn~\cite{kk-tginph-10} proved both the discrete and the continuous case to be NP-hard by a reduction from PLANAR 3SAT in 2006.

Other problems considered in the context of terrains include, for example, guards that are allowed to ``hover'' over the terrain~\cite{e-aatg-02}, or the computation of visibility polygons, \ie, the set of points on the terrain visible to a point $p$ on the terrain~\cite{lss-facvm-14}.

\subsection{Our Contribution}
\label{sec:contribution}

We present a discretization, \ie, finite sets $G, W \subset T$ such that any optimal solution for $\tgp(G,W)$
is optimal and feasible for $\tgp(T,T)$ as well:
\begin{enumerate}
\item
	For the sake of completeness we argue that for each finite guard candidate set $G$ there exists a finite witness set $W(G)$, such that a solution for $\tgp(G,W(G))$ is feasible for $\tgp(G,T)$ (Section~\ref{sec:discrete-witnesses}).

\item
	For each terrain $T$ there is a finite guard candidate set $U$, such that for each (possibly optimal) guard cover $C \subset T$ for $\tgp(T,T)$ there exists a guard cover $C' \subseteq U$ with $|C'| = |C|$ (Section~\ref{sec:discrete-guards}).

\item
	It then follows that any feasible optimal solution of $\tgp(U,W(U))$ is also optimal and feasible for $\tgp(T,T)$ (Section~\ref{sec:compl}).

\item
	Combining this discretization with the PTAS of Gibson et al.~\cite{gkkv-aasftg-09} for the discrete $\tgp(G,W)$ case, we obtain a PTAS for $\tgp(T,T)$ (Section~\ref{sec:ptas}).

\item
	The discretization also yields an IP formulation for exact solutions (Section~\ref{sec:exact}).
\end{enumerate}

\section{Discretization}
\label{sec:discretization}

This section is our core contribution.
We consider the following problem:
Given a terrain $T$, construct finite sets $G, W \subset T$ (guard candidate and witness points), such that any optimal (minimum-cardinality) solution for $\tgp(G,W)$ is optimal for $\tgp(T,T)$ as well.

We achieve this in three steps.
(1)~Provided with some finite guard candidate set $G \subset T$, Section~\ref{sec:discrete-witnesses} shows how to construct a finite witness set $W(G)$ from it, such that any feasible solution of $\tgp(G, W(G))$ is feasible for $\tgp(G,T)$ as well.
(2)~Section~\ref{sec:discrete-guards} discusses a finite set of guards $U$ that allows minimum-cardinality coverage of $T$.
(3)~In Section~\ref{sec:compl}, we argue that $\tgp(T,T)$ can be optimally solved using $\tgp(U,W(U))$, which is useful for a PTAS (Section~\ref{sec:ptas}) as well as for exact solutions (Section~\ref{sec:exact}).

Note that a discretization similar to that of Chwa et~al.\ for polygons~\cite{cjkmos-gagbgw-05} does not work.
Chwa et~al.\ pursue the idea of \emph{witnessable} polygons, which allow placing a finite set of witnesses, such that covering the witnesses implies full coverage of the polygon.
They show that this is possible if the polygon can be covered by a finite set of visibility kernels, \ie, if no point has a point-shaped visibility kernel.
Unfortunately, we can easily construct a terrain $T$ not coverable by a \emph{finite} union of \emph{visibility kernels}
$\VK(w) = \{ w' \in T \mid \V(w) \subseteq \V(w') \}$,
\ie, $T$ with finite $\left| \VK(w) \right|$ for some $w \in T$, compare Figure~\ref{fig:vis-kernel}.

\begin{figure}
	\begin{center}
		\includegraphics[width=0.45\textwidth]{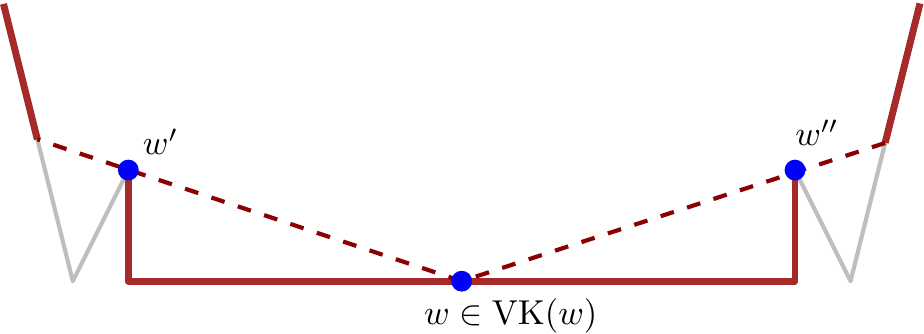}
	\end{center}
	\caption{Witness $w$, $\V(w)$ highlighted in red, and finite visibility kernel $\VK(w) = \{w, w', w''\}$ marked in blue.}
	\label{fig:vis-kernel}
\end{figure}

\subsection{Witnesses}
\label{sec:discrete-witnesses}

Suppose we are given a terrain $T$ and a finite set $G \subset T$ of guard candidates
with $\V(G) = T$ and we want to cover $T$ using only guards $C \subseteq G$, \ie, we want to solve $\tgp(G,T)$.
$G$ could be the set $V$ of vertices to solve the vertex guard variant of the TGP or any other finite set,
especially the one in Equation~\eqref{eq:u} of Section~\ref{sec:discrete-guards},
which contains all guard candidates necessary to find an optimal solution of the
continuous version of the problem, $\tgp(T,T)$.

$G$ is finite by assumption, but $T$ isn't, so we generate a finite set $W(G) \subset T$ of witness points, such that any feasible solution for $\tgp(G,W(G))$ also is feasible for $\tgp(G,T)$.

Let $g \in G$ be one of the guard candidates.
$\V(g)$ subdivides $T$ into $\bigO(n)$ closed subterrains.
For the sake of simplicity, we project those subterrains onto
the $x$-axis, allowing us to represent $\V(g)$ as a set of closed intervals.

We consider the overlay of all visibility intervals of all guard candidates in $G$, see Figure~\ref{fig:overlay} for an overlay of two guards.
It forms a subdivision consisting of \emph{maximal intervals} and \emph{end points}.
Every point in a \emph{feature} $f$ (either end point or maximal interval) of the subdivision is seen by the same set of guards
\begin{equation}
	G(f) = \left\{ g \in G \mid f \subseteq \V(g) \right\}.
\end{equation}

It is possible to simply place one witness in every feature of the subdivision.
Covering all $\bigO(n \cdot |G|)$ witnesses implies full coverage of $T$.

Similar to the shadow atomic visibility polygons in~\cite{crs-aaafmvgoag-11},
we can further reduce the number of witnesses by only using those
features $f$ with inclusion-minimal $G(f)$:
\todo{Maybe replace by survey cite later}

\begin{figure}
  \begin{center}
    \includegraphics[width=0.45\textwidth]{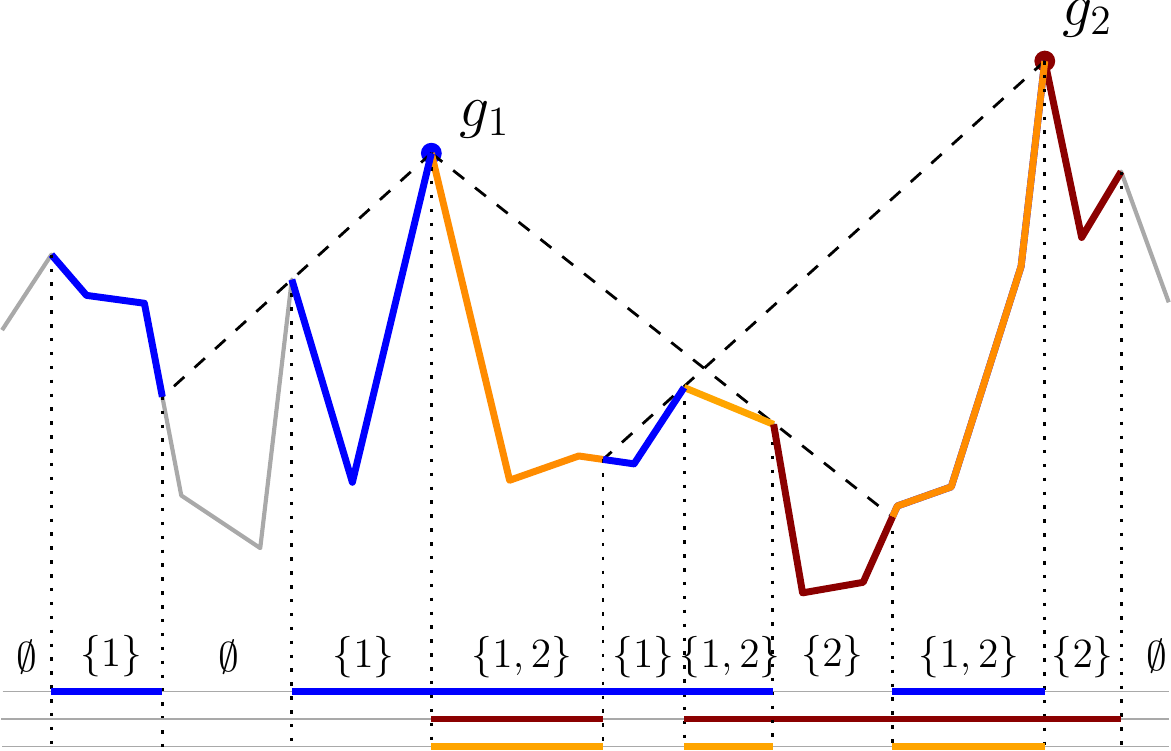}
  \end{center}
  \caption{Visibility overlay of two guards $g_1$ and $g_2$.
    $\V(g_1)$ is indicated in blue; $\V(g_2)$ in
    red; $\V(g_1) \cap $$\V(g_2)$ in orange. }
  \label{fig:overlay}
\end{figure}

\begin{theorem}\label{thm:w}
  Consider a terrain $T$ and a finite set of guard candidates $G$ with $\V(G) = T$.
  Let $F$ denote the features of the visibility overlay of $G$ and $w_f \in f$ an arbitrary point in the feature $f \in F$.
  Then for
  \begin{equation}\label{eq:w}
    W(G) = \left\{ w_f \mid f \in F, \, \textnormal{$G(f)$ is inclusion-minimal} \right\}
  \end{equation}
  any feasible solution of $\tgp(G, W(G))$ is feasible for $\tgp(G,T)$.
\end{theorem}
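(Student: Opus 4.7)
The plan is to show that an arbitrary $p \in T$ is covered by any feasible solution $C \subseteq G$ of $\tgp(G, W(G))$, i.e., $T \subseteq \V(C)$.

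First I would observe that the overlay is well-defined and yields a finite partition: each $\V(g)$ consists of $\bigO(n)$ closed intervals (projected to the $x$-axis), and the common refinement of the $|G|$ such interval systems produces $\bigO(n \cdot |G|)$ features, so $F$ is finite. By construction, every point of a feature $f \in F$ is seen by exactly the guards in $G(f)$, and if $f \ne f'$ then $G(f) \ne G(f')$ or one feature is a boundary endpoint of another. Finiteness of $F$ implies that the partial order $(\{G(f) : f \in F\}, \subseteq)$ has minimal elements, and more importantly, for every $f \in F$ there exists some $f^* \in F$ with $G(f^*) \subseteq G(f)$ such that $G(f^*)$ is inclusion-minimal; just take a descending chain starting at $G(f)$, which must terminate.

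Now I would pick any $p \in T$. Since $\V(G) = T$, the point $p$ lies in some feature $f \in F$ (if $p$ sits on the boundary between two maximal intervals, pick the endpoint feature containing $p$). Choose $f^* \in F$ as above with $G(f^*) \subseteq G(f)$ and $G(f^*)$ inclusion-minimal. Then by definition of $W(G)$, the witness $w_{f^*}$ lies in $W(G)$, and by feasibility of $C$ for $\tgp(G, W(G))$ there exists $g \in C$ with $w_{f^*} \in \V(g)$, i.e., $g \in G(f^*)$. Because $G(f^*) \subseteq G(f)$, we have $g \in G(f)$, so $g$ sees every point of $f$, in particular $p$. Hence $p \in \V(C)$, and since $p$ was arbitrary, $T \subseteq \V(C)$, proving feasibility of $C$ for $\tgp(G,T)$.

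The main obstacle, and the only subtle point, is justifying the descent to an inclusion-minimal $G(f^*)$ contained in $G(f)$: this relies crucially on the finiteness of $F$, which in turn rests on the fact that $\V(g)$ decomposes into $\bigO(n)$ intervals for each $g \in G$. Aside from this, the argument is essentially a single implication: coverage of one witness per inclusion-minimal feature forces coverage of every feature, hence of all of $T$.
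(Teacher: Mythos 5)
Your proof is correct and takes essentially the same approach as the paper, merely phrased as a direct implication rather than by contradiction: both arguments rest on the fact that every feature $f$ admits a feature $f^*$ with $G(f^*) \subseteq G(f)$ and $G(f^*)$ inclusion-minimal, whose witness $w_{f^*}$ forces some guard of $G(f^*) \subseteq G(f)$ into the solution, and that guard covers all of $f$. Your explicit justification of the descent to an inclusion-minimal feature via finiteness of $F$ spells out a detail the paper dismisses with ``by construction.''
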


\begin{proof}
  Let $C \subseteq G$ be a feasible cover of $W(G)$ and suppose some
  point $w \in T$ is not covered by $C$.
  By assumption, some point in $G$ can see $w$, so $w$ must be part
  of some feature $f$ of the visibility overlay of $G$.
  $W(G)$ either contains some witness in $w_f \in f$ or a witness
  $w_{f'}$ with $G(f') \subseteq G(f)$ by construction.
  In the first case, $w$ must be covered, otherwise $w_f$ would
  not be covered and $C$ would be infeasible for $\tgp(G,W(G))$.
  In the second case $w_{f'}$ is covered, so some guard in $G(f')$
  is part of $C$, but that guard also covers $f$ and therefore $w$, a contradiction.
\end{proof}

\begin{figure*}
  \begin{center}
    \includegraphics[width=0.7\textwidth]{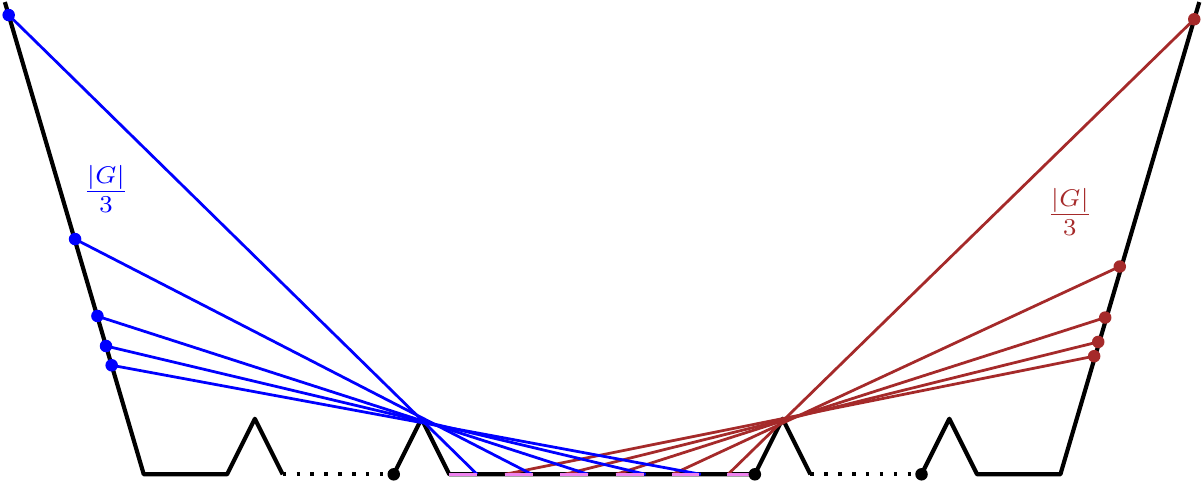}
  \end{center}
  \caption{
    The set of inclusion-minimal features may still be $\bigO(n \cdot |G|)$.
    For $n\in \bigO(|G|)$ consider the terrain with $\bigO(n)$ valleys, with
    $|G|/3$ guards placed on both slopes (blue on the left, red on the right).
    In addition there is one black guard placed in each valley.
    Thus, each of the $\bigO(n)$ valleys contains $\bigO(G)$ inclusion-minimal
    intervals depicted in violet.
  }
  \label{fig:inclusion_minimal}
\end{figure*}

\begin{obs}\label{obs:witcard}
Using only the set of witnesses on inclusion-minimal features may not reduce the worst case complexity of $\bigO(n \cdot |G|)$ witnesses, see Figure~\ref{fig:inclusion_minimal}.
\end{obs}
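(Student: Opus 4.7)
My plan is to justify Observation~\ref{obs:witcard} by exhibiting an explicit terrain/guard-set pair $(T,G)$ for which the visibility overlay of $G$ has $\Omega(n\cdot|G|)$ inclusion-minimal features, matching the worst-case upper bound from Theorem~\ref{thm:w}. The construction is the one depicted in Figure~\ref{fig:inclusion_minimal}: take a zig-zag terrain consisting of $\Theta(n)$ identical valleys, with two tall ``ramps'' on the far left and far right; place $|G|/3$ guards at distinct heights along the left ramp (the \emph{blue} guards), $|G|/3$ guards at distinct heights along the right ramp (the \emph{red} guards), and one \emph{black} guard at the bottom of each valley. By choosing the slopes of the two ramps and the depths of the valleys generically, no unintended alignments occur.

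The key step is then a visibility analysis inside a single valley $v$. Because the blue guards sit at strictly increasing heights, the portion of $v$ that each of them sees forms a strictly nested chain of subintervals of $v$: a higher blue guard peers over more ridges than a lower one and therefore illuminates a strictly larger sub-arc of the left wall of $v$. The same holds, symmetrically, for the red guards on the right wall. Overlaying these $2\cdot|G|/3$ nested chains on $v$ produces $\Theta(|G|)$ atomic cells, and I would verify that each cell is \emph{inclusion-minimal}: its guard set consists of the black guard of $v$ together with a unique prefix of the blue chain and a unique prefix of the red chain, so any adjacent cell loses (does not gain) at least one of these blue or red guards and hence cannot be a subset. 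This is the step where the construction has to be done carefully — the black valley guard is added to every cell, so I must check it does not accidentally make two cells have equal guard sets; this is ensured because adjacent cells differ in a blue- or red-visibility boundary.

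Multiplying across the $\Theta(n)$ independent valleys yields $\Theta(n\cdot|G|)$ inclusion-minimal features, so $|W(G)| \in \Omega(n\cdot|G|)$ even after the reduction of Theorem~\ref{thm:w}. Combined with the $\bigO(n\cdot|G|)$ upper bound that is immediate from the visibility overlay having $\bigO(n\cdot|G|)$ cells in total, this shows the bound is tight in the worst case, proving the observation.

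I expect the only real obstacle to be the visibility book-keeping: one must argue that the blue (respectively red) guards really do induce a totally ordered chain of visibility footprints inside every valley, and that adding the black in-valley guard does not collapse any inclusion-minimal cells. Both follow from the monotone geometry of the construction, but a clean write-up benefits from a small genericity assumption on the coordinates so that no two guards happen to share a tangent ridge.
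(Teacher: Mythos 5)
Your construction is exactly the one the paper uses: the paper offers no formal proof of this observation, only Figure~\ref{fig:inclusion_minimal} and its caption, which describe precisely your arrangement of $\Theta(n)$ valleys, $|G|/3$ guards on each of the two outer slopes producing nested (and, within each valley, interleaved) visibility boundaries, and one black guard per valley to keep the valleys' features pairwise incomparable. Your write-up is a faithful elaboration of that figure, so the approach is essentially the same as the paper's.
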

Nevertheless, we expect inclusion-minimal witnesses to speed up an implementation.

\begin{obs}
  $W(G)$ does not need to contain any interval end point.
  An end point $p$ with adjacent maximal intervals
  $I_1, I_2$ can always be left out, because $G(p) = G(I_1) \cup G(I_2)$.
\end{obs}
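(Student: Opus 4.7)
The plan is to establish the identity $G(p) = G(I_1) \cup G(I_2)$ and then to verify that every endpoint witness in $W(G)$ is dominated by a witness on one of the two adjacent maximal intervals.

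For the inclusion $G(I_1) \cup G(I_2) \subseteq G(p)$ I would use that $\V(g)$ is closed in $T$: the ``nowhere below $T$'' condition on the segment $\overline{gq}$ is preserved under pointwise limits $q_n \to q$, so $p$ lies in $\V(g)$ whenever all of $I_1$ (or $I_2$) does, since $p$ is an accumulation point of both intervals. For the converse $G(p) \subseteq G(I_1) \cup G(I_2)$ I would invoke the structure of visibility on a polygonal terrain: each $\V(g)$ decomposes into finitely many closed subterrains of positive length, so if $p \in \V(g)$ then $p$ sits in some constituent interval $[a,b]$ with $a < b$, and $g$ sees an open one-sided neighborhood of $p$. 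Maximality of the overlay cells then forces $g$ to cover all of $I_1$ or all of $I_2$, giving $g \in G(I_1) \cup G(I_2)$.

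The identity immediately yields $G(p) \supseteq G(I_1)$. Hence, if $p$ corresponds to an inclusion-minimal feature, then $G(p) = G(I_1)$ and $G(I_1)$ is itself inclusion-minimal, so $W(G)$ already contains a witness $w_{I_1}$ on a maximal interval. Dropping $w_p$ is then harmless: any cover $C$ covering $w_{I_1}$ contains some $g \in G(I_1) \subseteq G(p)$, and this $g$ also covers $p$. Rerunning the argument of Theorem~\ref{thm:w} with each endpoint feature replaced by such an adjacent maximal interval (which always exists with the same or smaller $G$-set) shows that dropping \emph{all} endpoint witnesses from $W(G)$ still leaves a valid witness set for $\tgp(G, T)$.

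The main obstacle I anticipate is the positive-length claim in the $\subseteq$ direction, \ie, ruling out isolated points of $\V(g)$. Such degeneracies could only arise from tangential alignments through terrain vertices and are typically excluded by a standard general-position assumption; even in degenerate cases a direct geometric argument shows that the side of the vertex facing $g$ still yields a visible neighborhood. Apart from this technicality, every step is routine.
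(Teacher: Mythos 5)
The paper offers no argument for this observation beyond the one-line ``because'' clause, so there is nothing to match your proof against; judged on its own terms, your proposal is right about what matters but wrong about the part you yourself flag as delicate. The inclusion $G(I_1)\cup G(I_2)\subseteq G(p)$ via closedness of $\V(g)$ is fine, and your third paragraph correctly observes that this inclusion alone suffices: if $G(p)$ is inclusion-minimal then $G(I_1)=G(p)$, so a maximal-interval witness dominating $p$ is already present, and rerunning the proof of Theorem~\ref{thm:w} shows all end-point witnesses can be dropped. That is a complete and correct justification of the headline claim.

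The reverse inclusion $G(p)\subseteq G(I_1)\cup G(I_2)$, however, genuinely fails without a general-position assumption, and your proposed rescue for the degenerate case is false. Take vertices $(0,0),(0.5,-1),(1,0),(1.5,-1),(2,0),(2.5,-1)$ and $g=(0,0)$: the segment from $g$ to $p=(2,0)$ runs along $y=0$ and is nowhere below $T$, so $g$ sees $p$, but every point of $T$ with $x\in(1,2)\cup(2,2.5)$ lies strictly below the line through $g$ and the blocking vertex $(1,0)$, so $g$ sees no one-sided neighborhood of $p$ --- in particular not ``the side of the vertex facing $g$.'' Hence $p$ is an isolated point of $\V(g)$, $g\in G(p)\setminus\bigl(G(I_1)\cup G(I_2)\bigr)$, and the stated identity is only valid generically (a caveat the paper itself omits). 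Your proof should either add an explicit non-degeneracy assumption for the identity, or --- better --- drop the $\subseteq$ direction entirely and present only the $\supseteq$ inclusion plus your domination argument, which proves the observation unconditionally.
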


\subsection{Guard Positions}
\label{sec:discrete-guards}

Throughout this section, let $T$ be a terrain, $V$ its vertices and $E$ its edges;
let $C \subset T$ be some finite, not necessarily optimal, guard cover of $T$.
Moreover, let $U$ be the union of $V$ with all $x$-extremal points of all visibility
regions of all vertices:
\begin{equation}\label{eq:u}
  U := V \cup \bigcup_{v \in V} \left\{ p \mid  \textnormal{$p$ is extremal in $\V(v)$} \right\}.
\end{equation}

\begin{obs}\label{obs:guardcard}
The set $U$ has cardinality $\bigO(n^2)$, as noted by Ben-Moshe et~al.~\cite{bkm-acfaafotg-07}.
\end{obs}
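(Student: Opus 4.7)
The plan is a straightforward counting argument built on facts already recorded in the preliminaries. First I would split $U$ into its two contributions: the vertex set $V$, which contributes exactly $n$ points, and the union $\bigcup_{v \in V} \{p \mid p \text{ is extremal in } \V(v)\}$, which is the interesting part. Bounding the latter by $\bigO(n^2)$ will suffice, since adding $n$ preserves the asymptotic bound.

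Next I would fix a single vertex $v \in V$ and bound the number of extremal points of $\V(v)$. The paper already states, in the discussion accompanying Figure~\ref{fig:visibility_region}, that for every point $p \in T$ the visibility region $\V(p)$ decomposes into $\bigO(n)$ maximal (closed) subterrains. Using the definition of \emph{extremal} given there, each such connected component contributes at most two extremal points, namely its $x$-leftmost and $x$-rightmost point (which may coincide if the component is a single point). Hence the number of extremal points of $\V(v)$ is in $\bigO(n)$.

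Summing this $\bigO(n)$ bound over all $v \in V$ gives $\bigO(n^2)$ extremal points in total, and adjoining the $n$ vertices of $V$ itself still keeps us in $\bigO(n^2)$. This yields $|U| \in \bigO(n^2)$, as claimed.

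The only step that is not purely trivial counting is the $\bigO(n)$ bound on the number of connected components of a visibility region; I would simply invoke it from the preliminaries (and the cited work of Ben-Moshe et~al.~\cite{bkm-acfaafotg-07}), since a self-contained derivation would only repeat a standard angular-sweep argument over the terrain edges as seen from $v$. No other obstacle is anticipated.
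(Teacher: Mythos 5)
Your argument is correct and is exactly the reasoning the paper leaves implicit: the observation is stated without proof (only a citation to Ben-Moshe et~al.), and the intended justification is precisely your counting — $n$ vertices, each visibility region $\V(v)$ having $\bigO(n)$ connected components and hence $\bigO(n)$ extremal points, giving $\bigO(n^2)$ in total. No gaps; your deferral of the $\bigO(n)$-components bound to the preliminaries matches what the paper itself does.
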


Ben-Moshe et~al. also add an arbitrary point of $T$ between each pair of consecutive points in $U$.
They use this extended set as their witness set.
We, however, show in this section that $U$ is sufficient to admit an optimal guard cover for $T$.

Our basic idea is that for any cover $C$ it is always possible to move guards in $C\setminus U$ to a neighboring point in $U$ without losing coverage.
In particular, this is possible for an optimal guard cover.

First observe that we can not lose coverage for an edge $e$ that is entirely covered by a guard $g \in C \setminus U$ if we move $g$ to one of its neighbors in $U$.

\begin{lemma}\label{lem:guard-edge-u}
  Let $g \in C \setminus U$ be a guard that covers an entire edge $e_i \in E$.
  Then $u_\ell, u_r$, the ``$U$-neighbors'' of $g$ with
  \begin{equation}\label{eq:ul-ur}
    \begin{aligned}
      u_\ell & = \max\{ u \in U \mid u < g \} \\
      u_r & = \min\{ u \in U \mid g < u \}
    \end{aligned}
  \end{equation}
  each entirely cover $e_i$, too.
\end{lemma}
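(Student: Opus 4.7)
The plan is to establish the lemma in two stages: first show that $u_\ell$ and $u_r$ each see both endpoints $v_i$ and $v_{i+1}$ of $e_i$, and then show that any point on $T$ which sees both endpoints of an edge must see the entire edge. A useful preliminary observation is that, since $V \subseteq U$, the $x$-interval $(u_\ell, u_r)$ contains no terrain vertex, so the portion of $T$ lying in this slab is a single straight segment, which is where $g$ lies.

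For the first stage, use symmetry of visibility: since $g$ sees $v_i$, we have $g \in \V(v_i)$, and I consider the connected component $C$ of $\V(v_i) \subseteq T$ containing $g$. Its two $x$-extremal points lie in $U$ by the definition in Equation~\eqref{eq:u}, and they must sit at $x$-coordinates $\leq u_\ell$ and $\geq u_r$ respectively, since $u_\ell$ and $u_r$ are the immediate $U$-neighbors of $g$ while $g \notin U$. Consequently the entire portion of $T$ with $x$-coordinate in $[u_\ell, u_r]$ is contained in $C \subseteq \V(v_i)$, giving $u_\ell, u_r \in \V(v_i)$. The identical argument with $v_{i+1}$ in place of $v_i$ yields $u_\ell, u_r \in \V(v_{i+1})$.

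For the second stage, I argue by contradiction that $u_\ell$ fails to see some $p$ in the interior of $e_i$. A standard terrain-visibility fact reduces any blockage of $\overline{u_\ell p}$ to the existence of a vertex $v_k$ of $T$ lying strictly above it: the piecewise-linear function ``height of $T$ minus height of $\overline{u_\ell p}$'' vanishes at the endpoints of the segment and, being piecewise linear, attains its maximum at a breakpoint, necessarily a vertex of $T$. Because $e_i$ has no vertex strictly inside its $x$-range, $v_k$ sits outside the $x$-range of $e_i$; WLOG assume $u_\ell$ is strictly left of $e_i$ (the degenerate case $u_\ell \in e_i$ is immediate, and ``$u_\ell$ right of $e_i$'' is symmetric). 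A direct computation then shows that, at the fixed abscissa $x_k$, the height of $\overline{u_\ell q}$ as $q$ varies along the line supporting $e_i$ has the form $\alpha + \beta/(q_x - u_{\ell, x})$, hence is monotone in $q_x$ over any range on which $q_x - u_{\ell, x}$ has constant sign. Because $u_\ell$ sees $v_i$ and $v_{i+1}$, this height is at least $v_{k, y}$ at both endpoints of $e_i$, and monotonicity propagates the inequality to $p$, contradicting that $v_k$ lies above $\overline{u_\ell p}$. The argument for $u_r$ is fully symmetric, and I expect the main obstacle to be exactly this second stage: cleanly combining the vertex-obstruction reduction with the monotonicity of heights while dispatching the degenerate positions of $u_\ell$ relative to $e_i$.
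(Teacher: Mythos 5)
Your proof is correct and takes essentially the same route as the paper: the decisive step --- that $u_\ell$ and $u_r$ stay inside the connected components of $\V(v_i)$ and $\V(v_{i+1})$ containing $g$, because the $x$-extremal points of those components belong to $U$ while $g$ does not --- is exactly the paper's argument. Your second stage just spells out the implication ``seeing both endpoints of $e_i$ implies seeing all of $e_i$,'' which the paper asserts without proof (it also follows from the convex-chain reasoning of Lemma~\ref{lem:single-interval-vis}); your vertex-obstruction plus monotonicity computation is a valid way to verify it.
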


\begin{proof}
  $g$ covers $e_i$, so $v_i, v_{i+1} \in \V(g)$, implying $g \in \V(v_i) \cap \V(v_{i+1})$.
  Moving $g$ towards $u_\ell$ does not move $g$ out of $\V(v_i)$ or $\V(v_{i+1})$, as the boundaries of those regions are contained in $U$ by construction.
  So $v_i, v_{i+1} \in \V(u_\ell)$ and $e_i \subseteq \V(u_\ell)$.
  Analogously: $e_i \subseteq \V(u_r)$.
\end{proof}

It remains to consider the \emph{critical edges}, \ie, those that are not entirely covered by a single guard, compare Figure~\ref{fig:critical_edge}.

\begin{figure}[b]
  \begin{center}
    \includegraphics[width=0.45\textwidth]{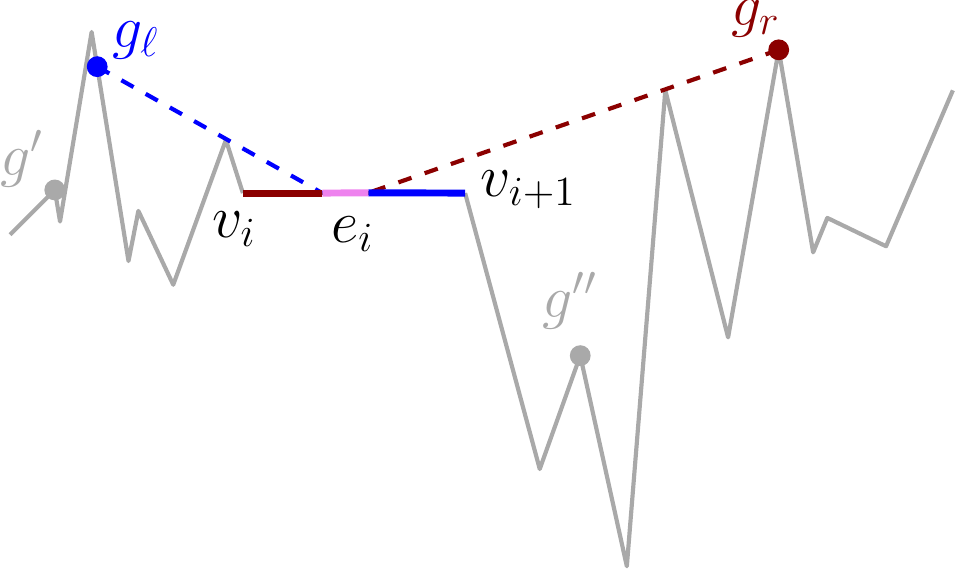}
  \end{center}
  \caption{The edge $e$ is critical w.r.t. the given guard cover.
    The right (left) part of $e_i$,
    indicated in blue (red), is seen by $g_\ell$ ($g_r$) only.
  }
  \label{fig:critical_edge}
\end{figure}

\begin{definition}[Critical Edge]
  $e \in E$ is a \emph{critical edge \wrt $g $} in the cover $C$ if
  $C \setminus \{ g \}$ covers some part of, but not all of, $e$.
\end{definition}
That is, after removing $g$, $e$ is only partially covered.

\begin{definition}[Left-Guard/Right-Guard]
  $g \in C$ is a \emph{left-guard (right-guard)} of
  $e_i \in E$ if $g < v_i$ ($v_{i+1} < g$) and $e_i$ is critical \wrt $g$.
  We call $g$ \emph{left-guard (right-guard)} if it is
  a left-guard (right-guard) of some $e \in E$.
\end{definition}

For the sake of completeness, we state and prove the following lemma, which also follows from the well-established order claim~\cite{bkm-acfaafotg-07}:

\begin{lemma}\label{lem:single-interval-vis}
  Let $g \in C$ be a left-guard (right-guard) of $e_i \in E$.
  Then $g$ covers a single interval of $e_i$, which includes $v_{i+1}$ ($v_i$).
\end{lemma}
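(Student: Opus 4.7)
The plan is to show that the set $\V(g) \cap e_i$ of points on $e_i$ that $g$ sees is a single closed interval whose right endpoint is $v_{i+1}$. The argument has three stages: first locate a visible point in the relative interior of $e_i$, then use it to constrain $g$ relative to the line $\ell$ supporting $e_i$, and finally propagate visibility rightward to $v_{i+1}$ via a monotone ``fan'' comparison.

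First I would observe that because $e_i$ is critical \wrt $g$, the set $A := e_i \setminus \V(C \setminus \{g\})$ is nonempty; since $\V(C \setminus \{g\})$ is a finite union of closed subterrains and hence closed, $A$ is open in $e_i$. As $C$ covers $T$, the set $A$ is contained in $\V(g)$. Being nonempty and open in the interval $e_i$, $A$ contains an open subinterval, so $g$ sees some $p$ strictly between $v_i$ and $v_{i+1}$. Since $\overline{gp}$ lies nowhere below $T$, at the $x$-coordinate of $v_i$ it is at height at least that of $v_i$; because both $v_i$ and $p$ lie on $\ell$, this forces the entire line through $g$ and $p$ to lie on or above $\ell$, and hence $g$ itself lies on or above $\ell$.

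The key step is then to show that $g$ sees every $q$ on $e_i$ with $p \le q \le v_{i+1}$. For fixed $q$, I would compare the segments $\overline{gp}$ and $\overline{gq}$ on the $x$-range from $g$ to $p$: both emanate from $g$, and a short slope computation, using that $g$ sits on or above $\ell$ while $p$ and $q$ sit on $\ell$, shows the slope of $\overline{gq}$ exceeds that of $\overline{gp}$, so $\overline{gq}$ lies pointwise on or above $\overline{gp}$ on this range. Combined with $\overline{gp} \ge T$ there (from visibility of $p$), this yields $\overline{gq} \ge T$ on the $x$-range from $g$ to $p$. On the remaining $x$-range from $p$ to $q$, the terrain equals $\ell$ and the segment $\overline{gq}$ connects a point on or above $\ell$ to a point on $\ell$, so it lies on or above $\ell = T$ there too. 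Hence $g$ sees $q$. Setting $q = v_{i+1}$ yields $v_{i+1} \in \V(g)$, and the same monotonicity shows that $\V(g) \cap e_i$ is upward-closed within $e_i$; it therefore equals $[p^\ast, v_{i+1}]$ where $p^\ast := \inf(\V(g) \cap e_i)$, which itself lies in $\V(g)$ by closedness.

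I expect the ``monotone fan'' slope comparison to be the central obstacle, although it reduces to a one-line calculation once $g$ is known to sit on or above $\ell$. A degenerate case in which $g$ lies exactly on $\ell$ only weakens the relevant inequalities without breaking the argument. The case where $g$ might see only a single endpoint of $e_i$ cannot arise under criticality: then $\V(C \setminus \{g\})$, being closed, would already contain all of $e_i$.
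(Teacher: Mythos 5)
Your proof is correct and follows essentially the same route as the paper's: establish that $g$ lies on or above the line $\ell$ supporting $e_i$, then propagate visibility rightward from a seen point $p$ to $v_{i+1}$ --- your slope comparison is exactly the paper's ``convex chain plus secant'' argument written in coordinates. You additionally justify two steps the paper treats as obvious (that criticality forces $g$ to see an \emph{interior} point of $e_i$, via closedness of visibility regions, and that this places $g$ on or above $\ell$), which is a welcome tightening; the only nitpick is that the \emph{entire} line through $g$ and $p$ need not stay above $\ell$, only its portion left of $p$, which is all you use.
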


\begin{proof}
  Refer to Figure~\ref{fig:critical_edge}.
  Obviously, $g$ is nowhere below the line supporting $e_i$.
  Let $p$ be a point on $e_i$ seen by $g$.
  It follows that $\overline{g p}$ and $\overline{p v_{i+1}}$
  form an $x$-monotone convex chain that is nowhere  below $T$.
  Thus, the secant $\overline{g v_{i+1}}$ is nowhere below $T$.
  It follows that $g$ sees $v_{i+1}$ as well as any
  point between $p$ and $v_{i+1}$ (same argument). The argument for the right-guard is analogous.
\end{proof}

\begin{cor}\label{cor:unique-left-guard}
  For each critical edge $e$ there is exactly one
  left-guard (right-guard) in $C$.
\end{cor}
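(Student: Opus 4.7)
My plan is to derive the corollary directly from Lemma~\ref{lem:single-interval-vis} via a nested-intervals-plus-redundancy argument for uniqueness, combined with a short covering argument for existence; I present the left-guard case, with the right-guard case following by the symmetric form of the lemma.

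For uniqueness, I assume for contradiction that $g_1, g_2 \in C$ are two distinct left-guards of the critical edge $e_i$, so both satisfy $g_j < v_i$ and $e_i$ is critical \wrt each. Lemma~\ref{lem:single-interval-vis} yields $\V(g_j) \cap e_i = [p_j, v_{i+1}]$ for some $p_j \in [v_i, v_{i+1}]$. Two closed subintervals of $e_i$ sharing the endpoint $v_{i+1}$ are linearly ordered by inclusion, so \obda $\V(g_2) \cap e_i \subseteq \V(g_1) \cap e_i$. Then $g_1 \in C \setminus \{g_2\}$ already covers every point of $e_i$ that $g_2$ covers, so $C \setminus \{g_2\}$ still covers all of $e_i$, contradicting the requirement that $e_i$ is critical \wrt $g_2$.

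For existence, I first rule out the alternatives to a left-guard covering $v_{i+1}$. No guard of $C$ can lie on the closed segment $e_i$, since such a guard sees all of $e_i$ (the connecting segments lie on $e_i \subseteq T$), contradicting the fact that $e_i$ is not covered by any single guard. The vertex $v_{i+1}$ is nevertheless covered by some $g \in C$, so $g < v_i$ or $g > v_{i+1}$. The case $g > v_{i+1}$ is excluded by the right-guard form of Lemma~\ref{lem:single-interval-vis} applied with $p = v_{i+1}$, which would force $\V(g) \supseteq [v_i, v_{i+1}] = e_i$, again a contradiction. Hence some $g < v_i$ covers $v_{i+1}$, and the guard whose removal from $C$ first uncovers a point of $e_i$ near $v_{i+1}$ realizes the definition of a left-guard; uniqueness pins it down. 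The right-guard case is symmetric, using $v_i$ in place of $v_{i+1}$.

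I expect the existence step to be the most delicate point: if several left-side guards jointly cover $v_{i+1}$ with overlapping visibility on $e_i$, one must pick out a guard whose removal uncovers something. Here, the criticality of $e_i$ (which keeps the right-side guards' coverage from reaching $v_{i+1}$, since otherwise a right-side guard would see $v_{i+1}$ and hence all of $e_i$) is what forces at least one left-side guard to be individually essential.
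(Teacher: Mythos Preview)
The paper states this corollary without proof, treating it as immediate from Lemma~\ref{lem:single-interval-vis}; in effect only the \emph{uniqueness} direction is being claimed and used. Your uniqueness argument is correct and is precisely the nested-interval observation the lemma supplies: two left-guards would see intervals $[p_1,v_{i+1}]$ and $[p_2,v_{i+1}]$ of $e_i$, one contained in the other, so the guard with the smaller interval is redundant on $e_i$ and $e_i$ cannot be critical with respect to it.

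Your existence argument, however, has a genuine gap, and in fact the existence claim fails for arbitrary (redundant) covers $C$. Take a critical edge $e_i$ and a cover containing $g_1,g_2<v_i$ with $\V(g_j)\cap e_i=[p_j,v_{i+1}]$, $v_i<p_1<p_2$, together with $g_3>v_{i+1}$ with $\V(g_3)\cap e_i=[v_i,q_3]$ and $p_2\le q_3<v_{i+1}$. No single guard covers $e_i$, so $e_i$ is critical; yet removing either $g_1$ or $g_2$ still leaves $e_i$ fully covered by the other together with $g_3$, so \emph{neither} is a left-guard (only $g_3$ is a right-guard). Your closing sentence---that ``the criticality of $e_i$ \dots\ forces at least one left-side guard to be individually essential''---is exactly where this breaks: criticality forbids any \emph{single} guard from covering $e_i$, but it does not prevent the left-side guards from being mutually redundant once a right-side guard overlaps them. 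Keep your uniqueness paragraph; drop the existence argument or restrict it to irredundant covers. For the paper's downstream use in Lemma~\ref{lem:leftright-guard-to-u}, what is actually needed is only that if $g$ is a left-guard of $e_r$ then some guard to the right of $e_r$ shares a point of $e_r$ with $g$, and that follows directly from $C$ being a cover together with Lemma~\ref{lem:single-interval-vis}.
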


\begin{cor}\label{cor:vis-intersect}
  Let $e\in E$ be a critical edge and $g_\ell, g_r \in C$ be its
  left- and right-guard. Then $\V(g_\ell)\cap e \cap \V(g_r) \not = \emptyset $.
\end{cor}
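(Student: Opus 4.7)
My plan is to argue by contradiction from the assumption that $\V(g_\ell) \cap e \cap \V(g_r)$ is empty, produce a third guard whose visibility on $e$ strictly dominates that of $g_\ell$ (or of $g_r$), and conclude that $g_\ell$ (respectively $g_r$) is then redundant on $e$, contradicting criticality.

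By Lemma~\ref{lem:single-interval-vis} applied to $g_\ell$ and $g_r$, we have $\V(g_\ell) \cap e = [p_\ell, v_{i+1}]$ and $\V(g_r) \cap e = [v_i, p_r]$ for some $p_\ell, p_r \in e$; their intersection is exactly $[p_\ell, p_r]$, which is empty iff $p_r < p_\ell$. Assuming this, I pick any $q \in (p_r, p_\ell) \subset e$. Since $C$ covers $T$ but neither $g_\ell$ nor $g_r$ sees $q$, there must be some $g^\star \in C \setminus \{g_\ell, g_r\}$ with $q \in \V(g^\star)$.

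I next locate $g^\star$. It cannot lie on $e$ itself, because any point of $T$ on the straight segment $e$ sees all of $e$, so such a $g^\star$ would leave $C \setminus \{g_\ell\}$ covering $e$ in full, contradicting that $e$ is critical \wrt $g_\ell$. Hence $g^\star < v_i$ or $g^\star > v_{i+1}$; by symmetry I treat the former. The slope-and-convex-chain argument in the proof of Lemma~\ref{lem:single-interval-vis} depends only on the guard lying strictly to the left of $v_i$ and seeing some point of $e$, not on its being a left-guard of $C$; replaying it for $g^\star$ yields $\V(g^\star) \cap e = [p_\ell^\star, v_{i+1}]$ with $p_\ell^\star \leq q < p_\ell$.

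Consequently $\V(g^\star) \cap e \supseteq \V(g_\ell) \cap e$, and dropping $g_\ell$ from $C$ costs nothing on $e$: everything $g_\ell$ used to cover is still covered by $g^\star \in C \setminus \{g_\ell\}$, while $e \setminus \V(g_\ell)$ was already covered by $C \setminus \{g_\ell\}$. Thus $C \setminus \{g_\ell\}$ covers $e$ in full, contradicting the criticality of $e$ \wrt $g_\ell$. The mirror case $g^\star > v_{i+1}$ derives the analogous contradiction with $g_r$. The one subtlety I would verify carefully is that Lemma~\ref{lem:single-interval-vis}'s conclusion is genuinely positional and applies to the ancillary guard $g^\star$, not merely to the designated left- and right-guards; if desired, one can simply redo the short slope-comparison argument in line.
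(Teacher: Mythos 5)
Your proof is correct and is essentially the argument the paper leaves implicit: the corollary is stated without its own proof, and the intended derivation is exactly your combination of Lemma~\ref{lem:single-interval-vis} (every guard beyond an endpoint of $e$ sees a terminal subinterval of $e$ anchored at $v_{i+1}$ resp.\ $v_i$) with full coverage of $e$ by $C$ and the criticality of $e$ with respect to $g_\ell$ and $g_r$. The subtlety you flag is also resolved the right way: the convex-chain step of Lemma~\ref{lem:single-interval-vis} only needs the guard to lie to the left of $v_i$ (resp.\ right of $v_{i+1}$) and to see an \emph{interior} point of $e$ --- which your $q \in (p_r,p_\ell)$ is --- so it applies verbatim to the ancillary guard $g^\star$.
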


The following Lemma shows that we can move a guard $g \in C \setminus U$
to its left neighbor in $U$ without losing coverage of $T$ if
$g$ is not a right-guard.

\begin{lemma}\label{lem:left-guard-to-u}
  Let $C$ be some finite cover of $T$, $g \in C \setminus U$
  be a left- but no right-guard, and let $u_\ell$ be the
  left $U$-neighbor of $g$ as in Equation~\eqref{eq:ul-ur}.
  Then
  \begin{equation}
     C' = \left( C \setminus \{ g \} \right) \cup \{ u_\ell \}
   \end{equation}
   is a guard cover of $T$.
 \end{lemma}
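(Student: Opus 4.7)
The plan is to verify that $C' = (C\setminus\{g\})\cup\{u_\ell\}$ covers $T$ by showing that every point uniquely covered by $g$ in $C$ — i.e., every $p\in A:=\V(g)\setminus\V(C\setminus\{g\})$ — lies in $\V(u_\ell)$. I will fix such a $p$ and analyze the edge $e_i$ containing it. If $e_i\subseteq\V(g)$, Lemma~\ref{lem:guard-edge-u} immediately delivers $e_i\subseteq\V(u_\ell)$, and we are done.

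Otherwise $e_i$ is critical \wrt $g$. Because $g$ is not a right-guard of any edge, we cannot have $g>v_{i+1}$, leaving two situations. If $v_i<g<v_{i+1}$, then $v_i\in V\subseteq U$ is a $U$-point below $g$, so $u_\ell\geq v_i$ and together with $u_\ell<g<v_{i+1}$ this forces $u_\ell\in[v_i,g)\subseteq e_i$; since $u_\ell$ lies on $e_i$ it trivially sees all of $e_i$. If instead $g<v_i$, then $g$ is a left-guard of $e_i$ and Lemma~\ref{lem:single-interval-vis} yields $\V(g)\cap e_i=[p_g,v_{i+1}]$. Since the extremal points of $\V(v_{i+1})$ lie in $U$ and the open interval $(u_\ell,g)$ contains no point of $U$, the points $u_\ell$ and $g$ belong to the same connected component of $\V(v_{i+1})$, so $u_\ell$ sees $v_{i+1}$. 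I will then show that $u_\ell$ also sees $p_g$; Lemma~\ref{lem:single-interval-vis} applied to $u_\ell$ will give $[p_g,v_{i+1}]\subseteq\V(u_\ell)\cap e_i$ and hence $p\in\V(u_\ell)$.

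For the central claim $p_g\in\V(u_\ell)$, observe that $u_\ell$ and $g$ lie on a common edge $e$ of $T$: letting $v_a$ be the left vertex of $g$'s edge, $v_a\in U$ and $v_a<g$ force $u_\ell\in[v_a,g)$, so both $u_\ell$ and $g$ lie on the supporting line $\ell_e$. Writing $v_b$ for the right vertex of $e$, the fact that $g$ sees $p_g$ implies $\overline{gp_g}(v_{b,x})\geq v_{b,y}$; since $g,v_b\in\ell_e$ this collapses to $\operatorname{slope}(\overline{gp_g})\geq\operatorname{slope}(\ell_e)$, placing $p_g$ on or above the extension of $\ell_e$. Because $u_\ell$ also lies on $\ell_e$, the same estimate yields $\operatorname{slope}(\overline{u_\ell p_g})\geq\operatorname{slope}(\ell_e)$. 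Two linearity comparisons then finish the job: on $[u_{\ell,x},g_x]$ the terrain coincides with $\ell_e$, so the linear segment $\overline{u_\ell p_g}$ (starting on $\ell_e$ and ending on or above it) stays nowhere below $T$; on $[g_x,p_{g,x}]$ the segments $\overline{u_\ell p_g}$ and $\overline{gp_g}$ agree at $p_g$ and satisfy $\overline{u_\ell p_g}(g_x)\geq g_y=\overline{gp_g}(g_x)$, so $\overline{u_\ell p_g}\geq\overline{gp_g}\geq T$ on the whole interval.

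The main obstacle will be this last geometric step — transporting ``$g$ sees $p_g$'' to ``$u_\ell$ sees $p_g$'' via the shared supporting line $\ell_e$ and a slope comparison across the intermediate vertex $v_b$. Everything else decomposes cleanly by edge and reduces to invoking Lemmas~\ref{lem:guard-edge-u} and~\ref{lem:single-interval-vis} in the correct subcase, together with the fact that $U$ contains the boundaries of every vertex's visibility region.
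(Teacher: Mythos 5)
Your proof is correct and rests on the same key idea as the paper's: since $g\notin U$, its left $U$-neighbor $u_\ell$ lies on the same edge, so for any point $p>g$ seen by $g$ the chain $\overline{u_\ell g}\cup\overline{gp}$ is convex and nowhere below $T$, whence the secant $\overline{u_\ell p}$ witnesses $p\in\V(u_\ell)$; fully covered edges are handled by Lemma~\ref{lem:guard-edge-u} in both arguments. The only difference is cosmetic: the paper applies the secant argument directly to every $p\in\V(g)\cap e_r$, while you apply it only to the leftmost such point $p_g$ and then recover the interval $[p_g,v_{i+1}]$ via Lemma~\ref{lem:single-interval-vis}, which is equally valid.
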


\begin{proof}
  By Lemma~\ref{lem:guard-edge-u}, edges entirely covered by $g$ are also covered by $u_\ell$.
  So consider $p \in  \V(g)\cap e_r$ on a critical edge $e_r$ \wrt $g$ as
  depicted in Figure~\ref{fig:left-guard-to-u}.
  The guard $g$ is in the interior of an edge $e$ since $g \in C \setminus U$.
  As $p$ is seen by $g$ it must be nowhere below the line supporting $e$.
  It follows that segments $\overline{u_\ell g}$ and $\overline{gp}$ form an
  $x$-monotone convex chain that is nowhere below $T$.
  Hence, the secant $\overline{u_\ell p}$ is nowhere below $T$, so $u_\ell$ sees $p$.
  In particular,  it holds that
  \begin{equation}
    \V(g)\cap e_r \subseteq \V(u_\ell)\cap e_r,
  \end{equation}
  for every critical edge $e_r$ \wrt $g$.
\end{proof}

\begin{figure}
  \begin{center}
    \includegraphics[width=0.45\textwidth]{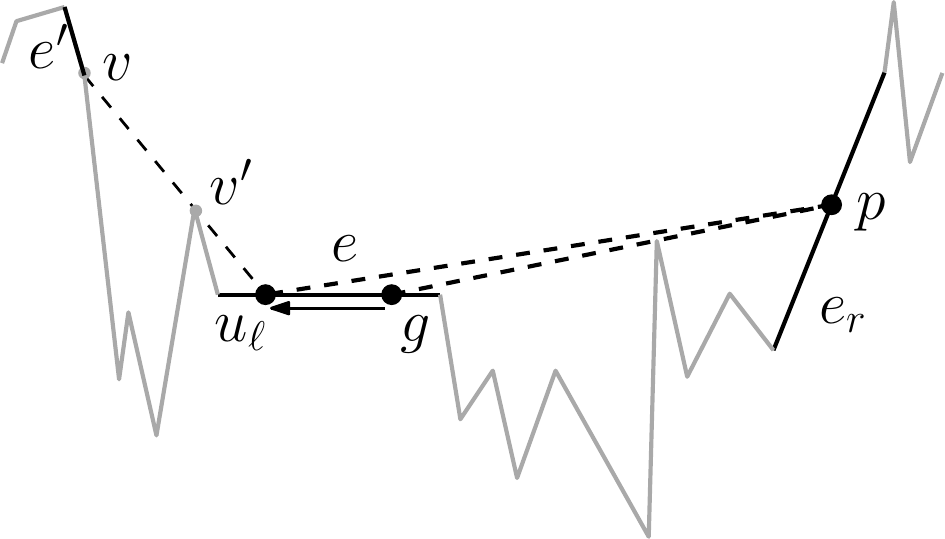}
  \end{center}
  \caption{Moving the left-guard $g$ further to the left.
    Any point $p$ on critical edge $e_r$ that is seen by $g$
    remains visible while moving $g$ to its left U-neighbor $u_\ell$.
    Also, non-critical edge $e'$ remains entirely visible since $g$
    does not cross $u_\ell$ which is induced by $v$ and $v'$.}
  \label{fig:left-guard-to-u}
\end{figure}

 \begin{cor}\label{cor:right-guard-to-u}
   Let $C$ be some finite cover of $T$, $g \in C \setminus U$
   be a right- but no left-guard, and let $u_r$ be the
   right $U$-neighbor of $g$ as in Equation~\eqref{eq:ul-ur}.
   Then
   \begin{equation}
     C' = \left( C \setminus \{ g \} \right) \cup \{ u_r \}
   \end{equation}
   is a guard cover of $T$.
 \end{cor}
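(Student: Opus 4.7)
The plan is to mirror the proof of Lemma~\ref{lem:left-guard-to-u} by exploiting the left-right symmetry of the terrain guarding setting. Since $g$ is a right-guard but no left-guard, every critical edge $e_\ell$ \wrt $g$ lies to the left of $g$, and by Lemma~\ref{lem:single-interval-vis} the part of $e_\ell$ that $g$ sees is a single interval including the right vertex of $e_\ell$. Edges entirely covered by $g$ pose no problem by Lemma~\ref{lem:guard-edge-u}, which hands us coverage by $u_r$ for free.

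So the only work is to handle points $p \in \V(g) \cap e_\ell$ on critical edges $e_\ell$ to the left of $g$. First I would observe that $g$ lies in the interior of some terrain edge $e$ (because $g \in C \setminus U$), and since $g$ sees $p$, the point $g$ is nowhere below the line supporting $e$. Next, I would argue that the segments $\overline{p g}$ and $\overline{g u_r}$ form an $x$-monotone convex chain that stays nowhere below $T$: the first segment is nowhere below $T$ because $p \in \V(g)$, and the second segment is nowhere below $T$ because by construction of $U$ no vertex of $T$ and no extremal point of any vertex visibility region lies strictly between $g$ and $u_r$, so $g$ can be slid rightward to $u_r$ along the supporting line of $e$ without leaving the halfplane above $T$ locally. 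Convexity of the chain at $g$ then yields that the secant $\overline{p u_r}$ is nowhere below $T$, hence $u_r$ sees $p$, and consequently $\V(g) \cap e_\ell \subseteq \V(u_r) \cap e_\ell$ for every critical edge $e_\ell$ \wrt $g$.

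Combining the two cases gives $\V(g) \subseteq \V(u_r) \cup \bigcup_{e \text{ entirely covered by } g} \V(u_r)$ on the coverage contributed by $g$, so $C' = (C \setminus \{g\}) \cup \{u_r\}$ covers everything $C$ did, proving that $C'$ is a guard cover of $T$.

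The only subtle point, and thus the mild obstacle, is justifying that moving $g$ rightward to $u_r$ does not cross a vertex of $T$ or an extremal visibility point that would break visibility of a previously seen point (an issue already addressed for the left direction in Lemma~\ref{lem:left-guard-to-u} via the definition of $U$ in Equation~\eqref{eq:u}); the symmetric argument applies verbatim because $U$ is closed under taking $x$-extremal points of vertex visibility regions on both sides. After that observation the corollary follows by the same convex-chain secant argument used for the left-guard case.
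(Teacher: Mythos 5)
Your proposal is correct and matches the paper's intent: the paper states this as a corollary precisely because it is the mirror image of Lemma~\ref{lem:left-guard-to-u}, and your argument reproduces that proof with left and right exchanged (entirely covered edges via Lemma~\ref{lem:guard-edge-u}, critical edges via the convex chain $\overline{p g}$, $\overline{g u_r}$ and its secant). One small simplification: the segment $\overline{g u_r}$ is nowhere below $T$ simply because $u_r$ lies on the same edge $e$ as $g$ (the right endpoint of $e$ is a vertex, hence in $U$), so $\overline{g u_r} \subseteq e \subseteq T$; no appeal to extremal points is needed for that step.
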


\begin{figure}
  \begin{center}
    \includegraphics[width=0.45\textwidth]{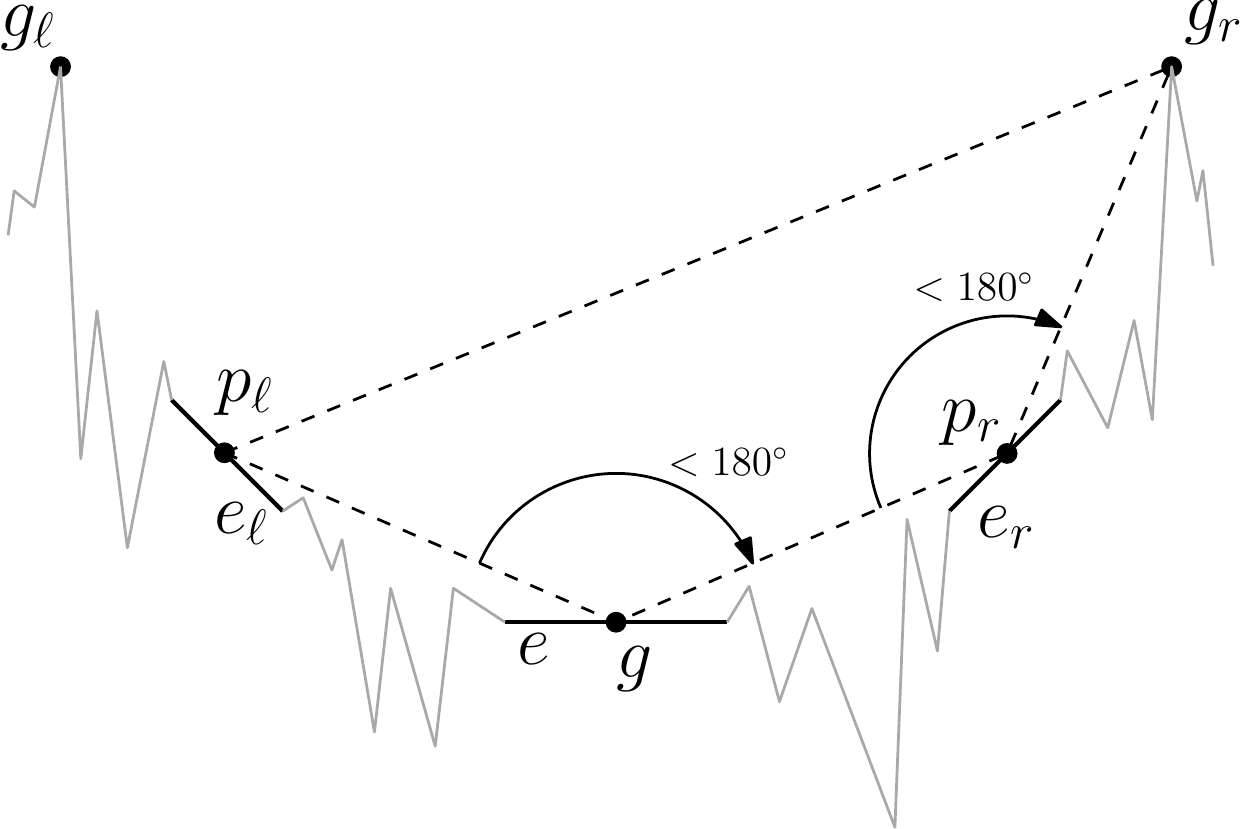}
  \end{center}
  \caption{No guard in $T \setminus U$ is left- and right-guard. Any point on the critical edge
    $e_\ell$ that is seen by $g$ is also seen by the guard $g_r$.
    Hence, $e_\ell$ can not be critical with respect to $g$, contradiction.
    A symmetric argument applies for $e_r$.}
  \label{fig:leftright-guard-to-u}
\end{figure}

\begin{lemma}\label{lem:leftright-guard-to-u}
  Let $C$ be some finite cover of $T$.
  No $g \in C \setminus U$ is both a left- and a right-guard.
\end{lemma}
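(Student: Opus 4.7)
The plan is to argue by contradiction. Suppose some $g \in C \setminus U$ is simultaneously a left-guard of a critical edge $e_r = e_i$ and a right-guard of some critical edge $e_\ell$, and let $g_r \in C$ be the unique right-guard of $e_r$ furnished by Corollary~\ref{cor:unique-left-guard}. The goal is to show $\V(g)\cap e_\ell \subseteq \V(g_r)$; once this is established, $g_r \in C\setminus\{g\}$ together with the fact that $C$ covers $e_\ell$ force $C\setminus\{g\}$ to cover all of $e_\ell$, contradicting the criticality of $e_\ell$ \wrt $g$.

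To prove the inclusion, fix any $q \in \V(g)\cap e_\ell$ and combine two tools already used in this section. First, because $g \in C\setminus U$ is not a vertex (as $V \subseteq U$), $g$ lies strictly inside some edge $e \subseteq T$. The slope analysis underlying the proofs of Lemma~\ref{lem:single-interval-vis} and Lemma~\ref{lem:left-guard-to-u} then shows that the slope of $\overline{g q}$ is at most the slope of $e$ (since $g$ sees $q$ and $q$ is at or to the left of $e$'s left endpoint), while the slope of $\overline{g v_{i+1}}$ is at least the slope of $e$ (since $g$ sees $v_{i+1}$, which follows from Lemma~\ref{lem:single-interval-vis} applied to the left-guard $g$ of $e_r$, and $v_{i+1}$ lies strictly to the right of $e$'s right endpoint). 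Hence $\overline{q g}$ and $\overline{g v_{i+1}}$ form an $x$-monotone convex chain that is nowhere below $T$, so its chord $\overline{q\, v_{i+1}}$ is itself nowhere below $T$, i.e., $q$ sees $v_{i+1}$.

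Second, I apply the order claim~\cite{bkm-acfaafotg-07} to the four strictly ordered points $q < v_i < v_{i+1} < g_r$; strictness follows from $q \leq v_{k+1} \leq v < v' \leq v_i$ (with $e_\ell = e_k$ and $v, v'$ the endpoints of $e$, using that $e$ separates $e_\ell$ from $e_r$ and that $g$ lies strictly inside $e$), $v_i < v_{i+1}$ trivially, and $v_{i+1} < g_r$ by the definition of right-guard. The order claim's hypotheses hold: $q$ sees $v_{i+1}$ by the previous step, and $v_i$ is seen by $g_r$ via Lemma~\ref{lem:single-interval-vis} applied to the right-guard $g_r$ of $e_r$. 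Therefore $q \in \V(g_r)$. As $q \in \V(g)\cap e_\ell$ was arbitrary, $\V(g)\cap e_\ell \subseteq \V(g_r)$, yielding the desired contradiction. I do not anticipate any genuinely hard step beyond the strict-ordering bookkeeping above, since the two geometric ingredients -- the convex-chain slope argument and the order claim -- are already part of the paper's toolkit.
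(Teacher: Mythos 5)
Your proof is correct, and its overall strategy coincides with the paper's: derive a contradiction by showing that the right-guard $g_r$ of $e_r$ already sees every point of $e_\ell$ that $g$ sees, so that $e_\ell$ cannot be critical \wrt $g$. The two arguments differ only in how $q \in \V(g_r)$ is established. The paper takes the common visibility point $p_r \in \V(g) \cap e_r \cap \V(g_r)$ supplied by Corollary~\ref{cor:vis-intersect} and observes that $\overline{q g}$, $\overline{g p_r}$, $\overline{p_r g_r}$ form a single $x$-monotone convex chain nowhere below $T$ (convexity at $g$ coming from the edge $e$ containing $g$, convexity at $p_r$ from the edge $e_r$), whence the chord $\overline{q g_r}$ is nowhere below $T$. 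You instead truncate the convex chain at $v_{i+1}$ to conclude that $q$ sees $v_{i+1}$, and then invoke the order claim on $q < v_i < v_{i+1} < g_r$ together with the fact that $g_r$ sees $v_i$ (Lemma~\ref{lem:single-interval-vis}). Both are sound; your version trades Corollary~\ref{cor:vis-intersect} and one convexity step for the order claim, which the paper cites as well established but otherwise avoids relying on, preferring self-contained convex-chain arguments. Your bookkeeping of the strict ordering $q \leq v < v' \leq v_i < v_{i+1} < g_r$ and of the fact that $g$ lies in the interior of an edge $e$ strictly separating $e_\ell$ from $e_r$ is accurate and is exactly what makes the slope comparison at $g$ go through.
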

\begin{proof}
  Refer to Figure~\ref{fig:leftright-guard-to-u}.
  We prove the claim by contradiction.
  Suppose that $g \in C \setminus U$ is
  the left-guard for $e_r$ (to the right of $g$) and
  the right-guard for $e_\ell$ (to the left of~$g$).

  Since $g$ is  the left-guard for critical edge $e_r$
  there must also be the right-guard $g_r$ for $e_r$.
  By Corollary~\ref{cor:vis-intersect} there is at least one point
  $p_r\in e_r$ that is seen by $g$ and $g_r$.

  Since $g \in C \setminus U$, it must be in the interior of some edge $e$.
  Now consider $p_r$ and any point $p_\ell \in \V(g)$.
  Both points are not below the line supported by $e$ and
  the same holds for $g$ and $g_r$ with respect to $e_r$.
  It follows that segments $\overline{p_\ell g}$, $\overline{gp_r}$,
  and $\overline{p_r g_r}$ form an $x$-monotone convex chain that
  is nowhere below $T$. Hence,  $g_r$ sees $p_\ell$.

  Thus, any point $p\in\V(g)$ to the left of $g$
  is also seen by $g_r$, a contradiction to $g$ being a right-guard.
\end{proof}



The next theorem shows that the set $U$ as defined in Equation~\eqref{eq:u} contains all guard candidates necessary for a minimum-cardinality guard cover of $T$.
So even if we are allowed to place guards anywhere on $T$, we only need those in $U$ and thus have discretized the problem.

\begin{theorem}\label{thm:g}
	Let $T$ be a terrain, $C \subset T$ a finite guard cover of $T$, possibly of minimum cardinality, and consider $U$ as defined in Equation~\eqref{eq:u}.
	Then there exists a guard cover $C' \subseteq U$ of $T$ with $|C'| = |C|$.
\end{theorem}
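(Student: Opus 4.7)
My plan is to iteratively rewrite $C$ into a set $C' \subseteq U$ one element at a time, replacing each $g \in C \setminus U$ with one of its $U$-neighbors $u_\ell, u_r$ from Equation~\eqref{eq:ul-ur}. The invariant is that the current set remains a guard cover of $T$, and the progress measure is $|C \setminus U|$, which strictly decreases with each replacement, so the procedure terminates after at most $|C|$ steps.

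For a given $g \in C \setminus U$, Lemma~\ref{lem:leftright-guard-to-u} ensures that $g$ is not simultaneously a left- and a right-guard, so exactly one of three disjoint cases applies. If $g$ is a left-guard but no right-guard, Lemma~\ref{lem:left-guard-to-u} directly yields that $(C \setminus \{g\}) \cup \{u_\ell\}$ is still a cover. Symmetrically, if $g$ is a right-guard but no left-guard, Corollary~\ref{cor:right-guard-to-u} justifies replacing $g$ by $u_r$. In the remaining case $g$ is neither, and I claim $(C \setminus \{g\}) \cup \{u_\ell\}$ is still a cover: any edge $e \in E$ entirely covered by $g$ is also entirely covered by $u_\ell$ via Lemma~\ref{lem:guard-edge-u}; any edge $e$ not entirely covered by $g$ is, by the case assumption, not critical \wrt $g$, hence $C \setminus \{g\}$ either covers nothing of $e$ (so in fact $g$ covers $e$ entirely, contradicting the ``not entirely'' assumption) or covers all of $e$, so removing $g$ is safe.

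Each swap exchanges one cover element for another, so the cardinality never grows, yielding $|C'| \leq |C|$. When $C$ is a minimum cover, equality $|C'| = |C|$ is forced since otherwise $C'$ would contradict the minimality of $C$; for a non-minimum $C$ with $|C| \leq |U|$, one can pad $C'$ with up to $|C| - |C'|$ arbitrary additional elements of $V \subseteq U$ to restore the claimed equality.

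The main obstacle is the third case. One must carefully verify that discarding a $g$ that is neither a left- nor a right-guard cannot uncover any edge. The key subtlety is the edge $e_j$ on whose interior $g$ lies (since $g \notin V$, it must lie strictly inside some edge): for $e_j$, $g$ trivially sees the whole edge, so it is entirely covered by $g$ and Lemma~\ref{lem:guard-edge-u} applies. For every other edge the dichotomy above suffices. A secondary technicality — that the chosen $U$-neighbor might already sit in $C$, so the swap actually shrinks the cover — is harmless and is absorbed by the padding argument above.
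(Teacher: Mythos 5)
Your proof is correct and follows essentially the same route as the paper's: iteratively swap each $g \in C \setminus U$ for a $U$-neighbor, with the three-way case split governed by Lemma~\ref{lem:left-guard-to-u}, Corollary~\ref{cor:right-guard-to-u}, and Lemma~\ref{lem:leftright-guard-to-u}. If anything you are more careful than the published proof, which disposes of the ``neither left- nor right-guard'' case by citing Lemma~\ref{lem:single-interval-vis} instead of spelling out the dichotomy (entirely covered by $g$, hence covered by $u_\ell$ via Lemma~\ref{lem:guard-edge-u}, versus not critical, hence covered by $C \setminus \{g\}$) and which silently passes over the possibility that the target $U$-neighbor already lies in $C$, which your padding remark handles.
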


\begin{proof}
	We iteratively replace a guard $g \in C \setminus U$ by one in $U$ until $C \subseteq U$.
	This maintains the cardinality of $C$, thus constructing the set $C'$ as claimed.

	Should $g$ be neither left- nor right-guard, it can be moved to a neighboring point in $U$ by Lemma~\ref{lem:single-interval-vis}.
	If, on the other hand, $g$ is only a left-, but not a right-guard (or vice versa), it can be moved to its left (right) neighbor in $U$ as shown in Lemma~\ref{lem:left-guard-to-u} and Corollary~\ref{cor:right-guard-to-u}.
	Lemma~\ref{lem:leftright-guard-to-u} states that $g$ cannot be a left- and a right-guard at the same time.
\end{proof}

\subsection{Complete Discretization}
\label{sec:compl}

In this section, we formulate our key result.
Let $\opt(G,W)$ denote the cardinality of an optimal, \ie, minimum-cardinality, solution for $\tgp(G,W)$.

\begin{theorem}\label{th:opt}
Let $T$ be a terrain, $U$ and $W(U)$ as defined in Equations~\eqref{eq:u} and~\eqref{eq:w}.
Then:
If $C$ is an optimal solution of $\tgp(U, W(U))$, \ie, $|C|=\opt(U,W(U))$, then $C$ is also an optimal solution of $\tgp(T,T)$, \ie, $\opt(T,T) = |C| =  \opt(U,W(U))$.
\end{theorem}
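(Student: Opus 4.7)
The plan is to establish the two inequalities $\opt(T,T) \le \opt(U,W(U))$ and $\opt(T,T) \ge \opt(U,W(U))$ and combine them to deduce that $C$ is in fact optimal for $\tgp(T,T)$. The two directions rely respectively on Theorem~\ref{thm:w} (witness sufficiency) and Theorem~\ref{thm:g} (guard sufficiency), so the argument is essentially a glueing of those two results.

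First I would verify that Theorem~\ref{thm:w} applies to $U$, \ie, that $\V(U) = T$. Since every point $p \in T$ lies on some edge $e_i = \overline{v_i v_{i+1}}$ and both endpoints $v_i, v_{i+1}$ see all of $e_i$, we have $\V(V) = T$, and $V \subseteq U$ implies $\V(U) = T$. In particular, $V$ itself is a finite cover of $T$, so $\opt(T,T)$ is finite and an optimal cover $C^* \subset T$ of cardinality $\opt(T,T)$ exists.

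For the direction $\opt(U,W(U)) \le \opt(T,T)$: take the optimal cover $C^*$ of $\tgp(T,T)$. By Theorem~\ref{thm:g} there is a guard cover $C' \subseteq U$ of $T$ with $|C'| = |C^*| = \opt(T,T)$. Since $W(U) \subseteq T \subseteq \V(C')$, $C'$ is feasible for $\tgp(U,W(U))$, giving $\opt(U,W(U)) \le |C'| = \opt(T,T)$.

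For the direction $\opt(T,T) \le \opt(U,W(U))$: let $C$ be an optimal solution of $\tgp(U,W(U))$. By Theorem~\ref{thm:w} applied to $G = U$, $C$ is feasible for $\tgp(U,T)$, \ie, $C \subseteq U \subseteq T$ and $\V(C) = T$. Hence $C$ is a feasible solution of $\tgp(T,T)$, so $\opt(T,T) \le |C| = \opt(U,W(U))$. Combining both inequalities yields $\opt(T,T) = \opt(U,W(U)) = |C|$, and since $C$ is a feasible solution of $\tgp(T,T)$ attaining this value, $C$ is optimal for $\tgp(T,T)$ as claimed.

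The argument is essentially a bookkeeping exercise once the two core theorems are in hand; the only mildly subtle point that I would take care to spell out is the hypothesis check $\V(U) = T$ needed to invoke Theorem~\ref{thm:w}, and the observation that an optimal $C^*$ for $\tgp(T,T)$ exists and is finite, which is what allows Theorem~\ref{thm:g} to be applied in the first direction. No genuine obstacle is expected beyond these.
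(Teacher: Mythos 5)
Your proposal is correct and follows essentially the same route as the paper: one inequality via Theorem~\ref{thm:g} (moving an optimal continuous cover into $U$) and the other via Theorem~\ref{thm:w} (witness sufficiency), the only cosmetic difference being that the paper routes both comparisons through the intermediate quantity $\opt(U,T)$ while you compare $\opt(T,T)$ and $\opt(U,W(U))$ directly. Your explicit checks that $\V(U)=T$ and that a finite optimal cover of $\tgp(T,T)$ exists are sensible additions that the paper leaves implicit.
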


\begin{proof}
We have $\opt(T,T) \leq \opt(U,T)$.
Theorem~\ref{thm:g} states that for an optimal guard cover $C''$ of $\tgp(T,T)$ there exists a guard cover $C'\subseteq U$ with $|C'| = |C''|$, \ie,
\begin{equation}
\opt(T,T) = |C''| = |C'| \geq \opt(U,T).
\end{equation}
This yields $\opt(T,T) = \opt(U,T)$.

If we are given an optimal solution $C$ to $\tgp(U,W(U))$, with $\opt(U, W(U))$ guards, $C$ is a feasible solution for $\tgp(U,T)$ according to Theorem~\ref{thm:w}. As $|C| = \opt(U,W(U)) \leq \opt(U,T)$ and $C$ is feasible for $\tgp(U,T)$, we have $\opt(U,W(U)) = \opt(U,T)$ which concludes the proof.
\end{proof}

\begin{obs}\label{obs:card}
Observations~\ref{obs:witcard} and~\ref{obs:guardcard} yield:
The set of guard candidates $U$ has cardinality $\bigO(n^2)$, the finite witness set $W(U)$ has cardinality $\bigO(n^3)$.
\end{obs}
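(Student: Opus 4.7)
The plan is to obtain the two cardinalities by plugging the stated observations into the definition of $U$ and the feature count underlying the construction of $W(G)$ in Theorem~\ref{thm:w}; no new structural argument is required, so this is really a counting exercise.

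For $|U| = \bigO(n^2)$, I would unpack Equation~\eqref{eq:u}: $U = V \cup \bigcup_{v \in V}\{p \mid p \text{ is extremal in } \V(v)\}$. As noted near Figure~\ref{fig:visibility_region}, each visibility region $\V(v)$ decomposes into $\bigO(n)$ maximal connected subterrains, and each such subterrain contributes at most two $x$-extremal points. Summing the resulting $\bigO(n)$ extremal points over all $n$ vertices, together with $|V| = n$, yields $|U| = \bigO(n^2)$, which is precisely the content of Observation~\ref{obs:guardcard}.

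For $|W(U)| = \bigO(n^3)$, I would apply Theorem~\ref{thm:w} with $G := U$. By Theorem~\ref{thm:w}, $|W(G)|$ is at most the number of features of the visibility overlay of $G$. After projecting onto the $x$-axis, each guard $g \in G$ contributes $\bigO(n)$ closed intervals (one per connected component of $\V(g)$), so the overlay is an arrangement of $\bigO(n \cdot |G|)$ intervals on the line, with $\bigO(n \cdot |G|)$ maximal intervals and endpoints in total. Observation~\ref{obs:witcard} certifies that restricting to inclusion-minimal features cannot improve this worst-case bound. Substituting $|G| = |U| = \bigO(n^2)$ gives
\begin{equation}
|W(U)| \;=\; \bigO(n \cdot |U|) \;=\; \bigO(n \cdot n^2) \;=\; \bigO(n^3),
\end{equation}
as claimed.

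The only point that requires any care is the feature count being linear in $|G|$: this is standard for arrangements of intervals on the real line, where $N$ intervals produce at most $2N$ endpoints and therefore $\bigO(N)$ elementary features. With both substitutions in place, the observation follows with no further work.
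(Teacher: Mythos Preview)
Your argument is correct and follows exactly the route the paper intends: invoke Observation~\ref{obs:guardcard} for $|U|=\bigO(n^2)$, use the $\bigO(n\cdot|G|)$ feature count (stated before Theorem~\ref{thm:w} and reaffirmed in Observation~\ref{obs:witcard}) for $|W(G)|$, and substitute $G=U$. The paper itself gives no proof beyond pointing to these two observations, so your expansion of the counting details is just a faithful unpacking of the same reasoning.
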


\section{The PTAS}
\label{sec:ptas}

In this section we combine the PTAS of Gibson et al.~\cite{gkkv-aasftg-09}
with the results from Section~\ref{sec:discretization} to obtain a PTAS for $\tgp(T,T)$.

Let us first formulate the result of Gibson et al., who presented a PTAS for the discrete terrain guarding problem, in our notation:

\begin{lemma}[PTAS by Gibson et al.~\cite{gkkv-aasftg-09}]\label{le:gibson}
Let $T$ be a terrain, and $G,W \subset T$ finite sets of guard candidates and points to be guarded. Then there exists a polynomial time approximation scheme for $\tgp(G,W)$. That is, there exists a polynomial time algorithm that returns a subset $C\subseteq G$ with $W\subseteq \V(C)$, such that $|C|  \leq (1+\epsilon)\cdot \opt(G,W) \; \forall \epsilon > 0$, where $\opt(G,W)$ denotes the optimal solution for $\tgp(G,W)$.
\end{lemma}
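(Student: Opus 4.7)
The plan is to invoke the result of Gibson et al.~\cite{gkkv-aasftg-09} essentially verbatim, after verifying that their formulation of the discrete terrain guarding problem matches Definition~\ref{def:tgp}. Their paper explicitly treats $\tgp(G,W)$ for finite (and possibly overlapping) $G,W \subset T$ as well as the semi-continuous variant $\tgp(G,T)$, so the statement of Lemma~\ref{le:gibson} is a direct restatement of their main theorem in our notation and requires no new work beyond checking that our feasibility condition $W \subseteq \V(C)$ coincides with theirs.

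For context I would sketch their approach, as this motivates why the lemma can be combined with our discretization in the next section. Their PTAS is a local search: starting from any feasible cover $C \subseteq G$, repeatedly look for a pair of subsets $A \subseteq C$ and $B \subseteq G \setminus C$ with $|B| < |A| \leq b$ such that $(C \setminus A) \cup B$ still covers $W$, and perform the swap; return $C$ when no such $b$-improvement exists. To bound the quality of the returned local optimum $L$ against a global optimum $O$, one builds a bipartite exchange graph on $L \cup O$ and exploits that this graph admits balanced separators of size $\bigO(\sqrt{|L|+|O|})$. The geometric ingredient that makes this go through for terrains is the order claim, which forbids certain ``crossing'' visibility patterns and yields the required planar-like structure; choosing $b = \Theta(1/\varepsilon^2)$ then gives $|L| \leq (1+\varepsilon)\,|O|$, and since each swap strictly decreases $|C|$, the number of iterations is polynomial for fixed $\varepsilon$.

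The main obstacle, were one to attempt a self-contained reproof, would be establishing the planar separator property of the exchange graph from the order claim — a delicate combinatorial argument that is the technical core of~\cite{gkkv-aasftg-09}. For the present paper this is used entirely as a black box, so the proof of Lemma~\ref{le:gibson} collapses to the observation that $\tgp(G,W)$ in Definition~\ref{def:tgp} is precisely the problem solved in~\cite{gkkv-aasftg-09}, and the lemma follows.
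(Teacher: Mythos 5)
Your proposal matches the paper's treatment exactly: the paper states Lemma~\ref{le:gibson} as an attributed restatement of the result of Gibson et al.~\cite{gkkv-aasftg-09} and gives no proof, using it purely as a black box. Your identification that the only work required is checking that $\tgp(G,W)$ in Definition~\ref{def:tgp} coincides with their problem formulation is correct, and your sketch of their local-search argument is accurate background that the paper itself omits.
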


We can now easily combine Theorem~\ref{th:opt} and Lemma~\ref{le:gibson} for a PTAS for the continuous TGP:

\begin{theorem}\label{th:ptas}
Let $T$ be a terrain. Then there exists a polynomial time approximation scheme for $\tgp(T,T)$, the continuous terrain guarding problem. That is, there exists a polynomial time algorithm that returns a subset $C\subset T$ with $\V(C)=T$, such that $|C|  \leq (1+\epsilon)\cdot \opt(T,T) \; \forall \epsilon > 0$, where $\opt(T,T)$ denotes the optimal solution for $\tgp(T,T)$.
\end{theorem}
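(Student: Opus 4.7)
The plan is to obtain the PTAS for $\tgp(T,T)$ by a direct composition of the discretization result (Theorem~\ref{th:opt}) with the PTAS of Gibson et al.\ for the discrete case (Lemma~\ref{le:gibson}). There is no genuine new mathematical difficulty here; all the heavy lifting has already been done in Section~\ref{sec:discretization}. The proof essentially amounts to verifying that the discretization can be performed in polynomial time and that the black-box PTAS, when applied to the discretized instance, yields a set that is both feasible and near-optimal for the continuous problem.

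First, I would construct the finite guard candidate set $U$ from Equation~\eqref{eq:u} by computing $\V(v)$ for every $v \in V$ and collecting the $x$-extremal points of each connected component of every such visibility region. By Observation~\ref{obs:guardcard} this gives $|U| \in \bigO(n^2)$ in polynomial time. Next, I would construct $W(U)$ from Equation~\eqref{eq:w} by computing the visibility overlay of all guards in $U$ and picking one representative point per feature (optionally restricting to inclusion-minimal features); Observation~\ref{obs:witcard} then yields $|W(U)| \in \bigO(n^3)$, again in polynomial time in $n$.

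Having the discrete instance $\tgp(U,W(U))$ in hand, I would invoke Lemma~\ref{le:gibson} with the desired $\epsilon > 0$. This returns, in time polynomial in $|U|+|W(U)|$ and therefore polynomial in $n$, a set $C \subseteq U$ with $W(U) \subseteq \V(C)$ and $|C| \leq (1+\epsilon)\cdot \opt(U,W(U))$. It remains to check that $C$ has the properties claimed for the continuous problem. Feasibility follows from Theorem~\ref{thm:w}: since $C \subseteq U$ covers $W(U)$, the set $C$ is feasible for $\tgp(U,T)$, i.e.\ $\V(C) = T$. Near-optimality follows from Theorem~\ref{th:opt}, which identifies $\opt(U,W(U)) = \opt(T,T)$, so that
\begin{equation}
|C| \;\leq\; (1+\epsilon)\cdot \opt(U,W(U)) \;=\; (1+\epsilon)\cdot \opt(T,T).
\end{equation}

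The only real point to be careful about is the polynomial running time: the PTAS of Gibson et al.\ is polynomial in the size of its input, so I need to ensure that both the construction of $U$ and $W(U)$ and their cardinalities are polynomial in $n$. This is precisely the content of Observation~\ref{obs:card}, and is therefore the step I would highlight as the ``main obstacle'' — though it is already handled by the preceding sections. Everything else is a mechanical composition.
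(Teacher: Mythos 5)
Your proposal is correct and follows essentially the same route as the paper: construct $U$ and $W(U)$, run the PTAS of Lemma~\ref{le:gibson} on $\tgp(U,W(U))$, and combine Theorem~\ref{thm:w} (feasibility) with Theorem~\ref{th:opt} (equality of optima) to transfer the guarantee to $\tgp(T,T)$. Your added emphasis on the polynomial size and constructibility of $U$ and $W(U)$ is a reasonable elaboration of what the paper leaves implicit via Observation~\ref{obs:card}.
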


\begin{proof}
Using Equations~\eqref{eq:u} and~\eqref{eq:w} we determine the sets $U$ and $W(U)$ for the terrain $T$, two finite subsets of $T$. Given an arbitrary $\epsilon > 0$  we can compute a set $C\subseteq U \subset T$ with $\V(C)=T$ such that:
\begin{equation}\label{eq:eps}
 |C|  \leq (1+\epsilon)\cdot \opt(U,W(U))
\end{equation}
using the PTAS of Gibson et al., Lemma~\ref{le:gibson}.
Moreover, Theorem~\ref{th:opt} yields:
\begin{equation}\label{eq:id}
\opt(U,W(U)) = \opt(T,T).
\end{equation}
Combining Equations~\eqref{eq:eps} and~\eqref{eq:id} we obtain
\begin{equation}
	\begin{aligned}
		|C| & \leq (1 + \epsilon) \cdot \opt(U, W(U)) \\
		    & =    (1 + \epsilon) \cdot \opt(T, T)
	\end{aligned}
\end{equation}
as claimed.
\end{proof}

\section{Exact Solutions}
\label{sec:exact}

Let $T$ be a terrain, and $U$ and $W(U)$ be defined as in Equations~\eqref{eq:u} and~\eqref{eq:w}.
Our discretization allows us to formulate the following integer program (IP)
to find an exact solution of $\tgp(T,T)$ by modeling guard candidates as binary variables and witnesses as constraints:
\begin{alignat}{3}
	\textnormal{min}    & \sum_{g \in U} x_g \label{eq:ip-begin} \\
	\textnormal{s.\,t.} & \sum_{g \in \V(w) \cap U} x_g \geq 1 & \quad & \forall w \in W(U) \\
	                    & x_g \in \{0, 1\}                     & \quad & \forall g \in U \label{eq:ip-end}
\end{alignat}
The IP in \eqref{eq:ip-begin}~--~\eqref{eq:ip-end} paves the way for an exact, while not polytime, algorithm for the continuous TGP.

\section{Conclusion}
\label{sec:conclusion}

We showed that if we want to solve $\tgp(G,T)$ with a finite guard candidate set $G$, we can find a finite witness set $W(G)$, such that a solution of $\tgp(G,W(G))$ is feasible for $\tgp(G,T)$.
Our main result is the construction of a finite set of guard candidates $U$ of size $\bigO(n^2)$ such that $U$ admits an optimal cover of $T$.

Combining the two discretizations we concluded that an optimal solution of $\tgp(U, W(U))$ is also an optimal solution of $\tgp(T,T)$.

A polynomial time approximation scheme (PTAS) for $\tgp(T,T)$ using a former PTAS~\cite{gkkv-aasftg-09} for the discrete terrain guarding problem $\tgp(G,W)$ immediately follows.
Moreover, we formulate $\tgp(T,T)$ as an integer program (IP), yielding exact solutions.


\small
\bibliographystyle{abbrv}
\balance
\bibliography{bibliography}

\end{document}